\theoremstyle{plain}
\theoremstyle{plain}
\theoremstyle{plain}
\theoremstyle{plain}
\newtheorem{prop}{Proposition}
\theoremstyle{plain}
\newtheorem{lemma}{Lemma}
\theoremstyle{plain}
\theoremstyle{plain}
\newtheorem{conjecture}{Conjecture}
\theoremstyle{remark}
\newtheorem{remark}{Remark}
\theoremstyle{discussion}
\theoremstyle{plain}
\theoremstyle{plain}
\newtheorem{thm}{Theorem}
\newtheorem{definition}{Definition}
\newcommand{\Rmnum}[1]{\expandafter\@slowromancap\romannumeral #1@}
\def\F{{\mathbb F}}
\def\1{{\mathds 1}}
\newcommand{\pp}{\mathbb{P}}
\newcommand{\e}{\varepsilon}
\begin{document}

\sloppy
 
\title{A Shannon Approach to Secure Multi-party Computations} 

\author{
  \IEEEauthorblockN{Eun Jee Lee and Emmanuel Abbe}
  \IEEEauthorblockA{Program in Applied and Computational Mathematics,
    Princeton University\\
    Email: ejlee, eabbe@princeton.edu}
}

\maketitle

%
%
\begin{abstract} 
In secure multi-party computations (SMC), parties wish to compute a function on their private data without revealing more information about their data than what the function reveals. In this paper, we investigate two Shannon-type questions on this problem. We first consider the traditional one-shot model for SMC which does not assume a probabilistic prior on the data. In this model, private communication and randomness are the key enablers to secure computing, and we investigate a notion of randomness cost and capacity. We then move to a probabilistic model for the data, and propose a Shannon model for discrete memoryless SMC. In this model, correlations among data are the key enablers for secure computing, and we investigate a notion of dependency which permits the secure computation of a function. 
While the models and questions are general, this paper focuses on summation functions, and relies on polar code constructions.
\end{abstract}

%
%
\section{Introduction}
Consider a group of $m$ parties, each with a private bit $x_i$, $i \in [m]=\{1,\dots,m\}$, which are interested in computing jointly a function $f(x_1,\dots,x_m)$, without revealing any other information (than what the function reveals) about their inputs to anybody else. For example, 
the parties want to vote between two candidates for presidency without revealing their vote, i.e., $f(x_1,\dots,x_m)=\sum_{i=1}^n x_i$. Can this be achieved? 

Note that we are asking here for an exact computation of the function $f$, with an arbitrary number of parties $m$ (possibly low), and no information leakage. The latter requirement means that no additional information about the inputs must be shared than what would be shared in a model with a trusted party, which takes care of the computation. 
Hence, even in the case of a summation function, a noise-perturbation approach will not work in this framework. 
Of course, the above cannot be achieved without leveraging some ``security primitive". 
With secure multi-party computations (SMC), this goal is achieved by assuming that the parties have access to private communication. 
The ideas of SMC were first introduced by Yao in \cite{yao}, in a two-party setup, in particular with the millionaire problem.  
General multiparty protocols were then obtained by  
Goldreich, Micali and Wigderson \cite{gmw} for computational security, and by Ben-Or, Goldwasser and Wigderson \cite{bgw} and by Chaum, Cr\'epeau and Damg\.ard \cite{ccd} for information-theoretic security, using in particular secret-sharing \cite{shamir}. This paper focuses on the latter setting. 

Information-theoretic (IT) security does not rely on the computational power of the adversary, i.e., on hardness assumptions. The models and questions in IT SMC are however very different than the ones studied in the Shannon information theory models. In this paper, we consider the following problems. First, we consider a traditional model for SMC (with private communication, private access to randomness and honest-but-curious parties) and investigate a notion of randomness cost needed to compute a given function securely. Identifying the least amount of randomness is primarily a question which we find mathematically interesting and which connects to information theory subjects, in particular to the study of entropy vectors. It is however also a notion which captures the complexity of a function $f$ for its secure computation\footnote{There are various other complexity measures, such as the number of communication rounds and the computational complexity.}. 
In the second part of the paper, we propose a Shannon model for SMC, assuming the parties input to be drawn from a discrete memoryless source, and requiring the function computation and the security requirement to hold up to a vanishing error probability in an asymptotic regime. In this model, the correlation among the data can be leveraged to obtain secure computations. 
 This model departs significantly from the traditional SMC models discussed above, on the other hand, it is defined in a similar setting as for traditional information theory problems such secrecy \cite{AC,maurer,CN} or wire-tap channels \cite{wiretap}. Along these lines, a Shannon type model was recently proposed in \cite{tyagi} for a notion of ``secure computation'', which is however different than SMC and the notions discussed in this paper. In \cite{tyagi}, the parties wish to compute a function on their inputs using communication links which are eavesdropped, and the goal is to compute $f$ without allowing an eavesdropper to compute it. This is different from our setting, where the communication links between parties are secured, and where the parties themselves are the eavesdropper toward one another. Other works relevant to our setting are the interactive source compression \cite{ishwar} and the compress and compute problems \cite{korner-marton,code-comp}, but again, these do not take into account the privacy of the inputs among the parties.

%
%
\section{Notation}
In what follows, $[n]=\{1,...,n\}$, $|A|$ denotes the cardinality of a set $A$, $X^n$ denotes a vector of length $n$ and $X_i$ represents the $i^{th}$ element of the vector $X^n$. For $x\in \{0,1\}^n$ and $S\in [n]$, $x[S]=\{x_i:i\in S\}$. Finally, for two vectors $X^n$ and $Y^n$, $X^n\oplus Y^n$ represents the component-wise XOR addition.

%
%
\section{One-shot Model}
In this section, secure multi-party computation protocols are studied in the one-shot setting, where the parties' inputs have no probabilistic prior (equivalently a uniform prior) and where the function computation is done once. 


\begin{definition}
In the honest-but-curious-network (HCN) model,  
\begin{enumerate}
\item any pair of parties can communicate on a secured channel,
\item each party has access to randomness privately,
\item every party is honest-but-curious\footnote{Formal definitions are provided in \cite{goldreich}.}, i.e., the parties follow the protocol without deviating from it. However, collecting all the information exchanged in the protocol, the parties may try to learn additional information about other parties' inputs,
\item the parties have access to a synchronized clock.
\end{enumerate}
A protocol $\Pi$ in this model is a predetermined sequence of actions taken by the parties on a finite time scale $T$, where $T \geq 1$ is odd.   
At time $t=0$, each party possesses its own input. At an odd time $t \in \{1,3,\dots, T\}$, each party $P_i$ possesses $Y_{i,t}(\Pi)$, 
and can take the following actions:
\begin{enumerate}
\item[a)] draw a discrete random number $R_{i,t}(\Pi)$,
\item[b)] make a computation using $Y_{i,t}(\Pi)$ and $R_{i,t}(\Pi)$,
\end{enumerate}
and at even time $t \in \{2,4,\dots,T-1\}$, each party can transmit information to some other parties.  
Finally, we define the view of party $P_i$ from the protocol by $Y_i(\Pi)
=(Y_{i,1}(\Pi), R_{i,1}(\Pi), Y_{i,3}(\Pi), R_{i,3}(\Pi), \dots,Y_{i,T}(\Pi),R_{i,T}(\Pi))$.
\end{definition}

In SMCs, the notion of security is defined with the ideal vs. real model paradigm. In the ideal model, trusted parties provide securely their inputs to a trusted party which provides back securely the outputs to them. The real protocol is then deemed secure if any adversarial attack to the protocol has a counterpart that can be simulated in the ideal model. We refer to \cite{goldreich} for formal definitions.
In this paper, we adopt the following information theoretic definitions. 


\begin{definition}
Let $P_1,\dots,P_m$ denote $m$ parties, where party $P_i$ has input $X_i$, $i \in [m]$.  For simplicity, the inputs are assumed to take values in $\F_2=\{0,1\}$. We assume that $X_1,\dots,X_m$ are independent and uniformly distributed\footnote{One can define the inputs to be deterministic instead of uniform and provide worst-case notions of accuracy and security, which lead to the same result for the purpose of this paper.} random variables. 
A protocol $\Pi$ in the HCN model computes accurately and securely deterministic functions $f_1,\dots,f_m$ (taking values in a discrete set) if
\begin{itemize}
\item {\bf [Accuracy]} Each party $P_i$ can compute $f_i(X_1,\dots,X_m)$, i.e., $H(f_i(X_1,...,X_m)|Y_i(\Pi),X_i)=0$,
\item {\bf [Security]} Each party $P_i$ does not learn more information about the other parties' inputs than what the function reveals, i.e., $H(X_{\sim i}|X_i,Y_{i}(\Pi),f_i(X_1,...,X_m))$\\${\ \ \ \ \ \ \ \ \ \ \ \ \ \ \ =H(X_{\sim i}|X_i,f_i(X_1,...,X_m))}$,
\\where ${X_{\sim i}=(X_j : j\in [m] \setminus \{i\})}$. 
\end{itemize}
\end{definition}

%

\begin{definition}
We define the least amount of randomness that is required to compute accurately and securely functions $f_1,\dots,f_m$ on $m$ inputs $X_1,\dots,X_m$ by  
\begin{eqnarray}
\rho (f_1,\dots,f_m) = \inf_{\Pi} H(Y_1(\Pi),\dots,Y_m(\Pi) | X_1,\dots,X_m),
\end{eqnarray}
where the minimization is over all protocols $\Pi$ in the HCN model which computes accurately and securely $f_1,\dots,f_m$. 
\end{definition}
Notice that $\rho$ is invariant under the ordering of the functions. Note that all the $f_i$ may be the same. 
\subsection{The XOR function}
In this section, we investigate the XOR function, and assume that one of the party computes the function. It can always send back the output to other parties in a final round of the protocol. The techniques are based on traditional one-time pad and secret sharing steps. 
\begin{prop}
For any $m\geq 3$, 
\begin{eqnarray}
\rho (\emptyset,\dots,\emptyset, X_1\oplus \cdots \oplus X_m)= 1.
\end{eqnarray}
\end{prop}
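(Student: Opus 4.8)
The plan is to establish the two matching inequalities $\rho\le 1$ and $\rho\ge 1$ separately.

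\emph{Upper bound.} First I would exhibit a protocol meeting the value $1$. Arrange the parties on a ring with $P_m$ as aggregator: $P_m$ draws a uniform bit $R\in\F_2$ and sends $S_0=X_m\oplus R$ to $P_1$; each $P_i$, for $i=1,\dots,m-1$, on receiving $S_{i-1}$ forwards $S_i=S_{i-1}\oplus X_i$; finally $P_m$ receives $S_{m-1}=R\oplus X_1\oplus\cdots\oplus X_m$ and unmasks to output $f_m=X_1\oplus\cdots\oplus X_m$. Accuracy is immediate. For security I would observe that every intermediate message has the form $R\oplus(\text{XOR of some inputs})$, hence is uniform and independent of all inputs because $R$ is a one-time pad; thus each non-aggregator's view is independent of $X_{\sim i}$ given $X_i$, and $P_m$'s view is a function of $(R,f_m)$ with $R$ independent of the inputs, which yields the required security identity. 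Since, given $(X_1,\dots,X_m)$, the whole collection of views is a deterministic function of the single bit $R$, we get $H(Y_1(\Pi),\dots,Y_m(\Pi)\mid X^m)=H(R)=1$, so $\rho\le 1$.

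\emph{Lower bound.} For the converse I would first reduce the quantity to randomness count: since the views record every party's local randomness and every message and possession is a deterministic function of the inputs and that randomness, $H(Y_1(\Pi),\dots,Y_m(\Pi)\mid X^m)$ equals the total entropy $H(\mathcal{R})$ of the randomness drawn (which is independent of the inputs). It therefore suffices to show any accurate and secure protocol draws at least one bit. I would then record the basic constraint on the aggregator, obtained by combining accuracy ($f_m$ is a function of $(Y_m,X_m)$) with $P_m$'s security identity:
\begin{equation}
I(X_{\sim m};Y_m\mid X_m)=I(X_{\sim m};f_m\mid X_m)=H(f_m\mid X_m)=1 ,
\end{equation}
the last step because the XOR of the $m-1\ge 2$ remaining uniform independent bits is itself uniform.

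To turn this into $H(\mathcal{R})\ge 1$ I would proceed in two steps. First, rule out zero randomness: in a deterministic protocol the first non-constant message is a function of its sender $P_a$'s bit alone, hence (being a non-constant map on $\F_2$) injective, so its recipient learns $X_a$ exactly; since $m\ge 3$ the XOR never determines an individual input, this breaks security whether the recipient is the aggregator or not, and with no informative message accuracy fails. Thus randomness is strictly necessary. Second, to upgrade ``some'' to ``at least one bit'', I would argue that the first informative message $M=g(X_a,R_a)$ must be independent of $X_a$ as seen by its recipient (same $m\ge 3$ reasoning) while still being the channel by which $X_a$ enters $f_m$; this is a perfect one-time-pad encryption of a one-bit input, and Shannon's perfect-secrecy bound forces $H(R_a)\ge H(X_a)=1$, whence $H(\mathcal{R})\ge 1$ and $\rho\ge 1$.

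The delicate point, where I expect the real work to lie, is this quantitative converse: formalizing ``the first informative message must hide a bit'' into the clean inequality $H(R_a)\ge 1$ in the interactive, multi-round setting. The subtlety is to certify that the masked input bit is genuinely recoverable from the message together with the key, so that the perfect-secrecy bound applies, rather than being cancelled or re-encoded in a way that spreads the necessary randomness across several parties; defining the ``first informative message'' cleanly across rounds is part of the same difficulty. By contrast, the upper bound and the identity $I(X_{\sim m};Y_m\mid X_m)=1$ are routine.
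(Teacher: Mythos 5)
Your upper bound is correct and is essentially the paper's achievability protocol (a ring with a one-time pad; the paper has $P_1$ draw the pad and send it separately to the aggregator, but the randomness cost is the same $H(R)=1$). The problem is the lower bound, where your argument has a genuine gap that you yourself flag: the step from ``some randomness is needed'' to $H(R_a)\ge 1$ via perfect secrecy of ``the first informative message'' is not carried out, and it does not go through as stated. The Shannon perfect-secrecy bound $H(R_a)\ge H(X_a)$ requires that $X_a$ be recoverable from $(M,R_a)$, i.e.\ that $g(\cdot,r)$ be injective for each $r$; nothing forces the first message to have this form. Indeed, in the paper's own optimal protocol the first thing $P_1$ sends to $P_m$ is the pure random bit $Z$, which carries no input information at all, and $X_1$ enters the computation only through the combination of two messages sent to two different recipients. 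So ``the first informative message perfectly encrypts one input bit'' is false in general, and any attempt to chase the ``first message that really carries $X_a$'' runs into exactly the interactive, multi-round, multi-recipient bookkeeping you acknowledge you cannot resolve. As written, the converse is a plan plus an admitted hole, not a proof.

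The paper avoids localizing the randomness entirely. It first reduces to $m=3$ (adding parties only increases $\rho$) and notes that computing $X_1\oplus X_2\oplus X_3$ at $P_3$ is equivalent to computing $X_1\oplus X_2$. It then aggregates \emph{all} communication between $P_1$ and $P_3$ into a single variable $A_1$ and all communication between $P_2$ and $P_3$ into $A_2$, writes the accuracy and security requirements as entropy identities ($H(X_1,X_2\mid A_1,A_2)=1$, $H(X_i\mid A_1,A_2)=1$, $H(X_1\mid A_2,X_2)=1$, etc.), and chains them:
\begin{equation}
H(A_1,A_2\mid X_1,X_2)\;\ge\; H(A_2\mid X_2)\;=\;H(A_2)\;\ge\;H(A_2\mid A_1)\;=\;H(X_2,A_2\mid X_1,A_1)\;\ge\;H(X_2\mid X_1,A_1)\;=\;1 .
\end{equation}
This is a purely global, Shannon-inequality argument on the transcript variables; it never needs to identify which party's local coins do the hiding, which is precisely the point where your approach stalls. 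If you want to salvage your route, you would need a lemma of the form ``for every accurate and secure protocol there exists a cut across which one masked input bit is perfectly recoverable together with the key,'' and that is essentially what the paper's $A_1/A_2$ decomposition delivers for free.
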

Note that for $m=2$ the problem is trivial, the party with XOR function can always recover the other input.

\begin{proof}
We start by the converse. 
To show that $\rho (\emptyset,\dots,\emptyset,X_1\oplus \cdots \oplus X_m)\geq 1$, it is enough to show $\rho (\emptyset, \emptyset, X_1\oplus X_2 \oplus X_3)\geq 1$, since increasing the number of parties only increases the randomness required by the protocol.

Clearly, since $P_3$ has input $X_3$, we have $\rho (\emptyset, \emptyset, X_1\oplus X_2 \oplus X_3)=\rho (\emptyset, \emptyset, X_1\oplus X_2)$. 
Denote by $A_1$ all the information that was exchanged between $P_1$ and $P_3$ throughout the protocol, and 
denote by $A_2$ all the information that was exchanged between $P_2$ and $P_3$ throughout the protocol. 

From the accuracy requirement, we have ${H(X_1\oplus X_2|A_1,A_2)=0}$,
in addition to this, from the security requirement on $P_3$, we have 
${H(X_1|A_1,A_2)=1\text{ and }H(X_2|A_1,A_2)=1}$,
which is equivalent to 
\begin{eqnarray}
H(X_1,X_2|A_1,A_2)&=&1\label{xor:accuracy}\\
H(X_1|A_1,A_2)&=&1\label{xor:p3 sec1}\\
H(X_2|A_1,A_2)&=&1.\label{xor:p3 sec2}
\end{eqnarray}
From the security requirement on $P_1$ and $P_2$, we have 
\begin{eqnarray}
H(X_2|A_1,X_1)=1\text{ and }H(X_1|A_2,X_2)=1.\label{xor:p2 sec}
\end{eqnarray}
Finally, from the independence of the inputs
\begin{eqnarray}
H(X_1|X_2)=1\text{ and }H(X_2|X_1)=1.\label{xor:x1 ind}
\end{eqnarray}
Since $A_1$ and $A_2$ are only a part of all information transmitted and received,
\begin{align}
\rho(\emptyset,\emptyset,X_1 \oplus X_2 \oplus X_3) \geq H(A_1,A_2|X_1,X_2).
\end{align}
We now show that the last term is more than 1. We have
\begin{eqnarray}
H(A_1,A_2|X_1,X_2)&\geq& H(A_2|X_1,X_2)\nonumber\\
&=&H(A_2|X_2)\label{xor:pf 1}\\
&=&H(A_2)\label{xor:pf 2}\\
&\geq& 1.\label{xor:pf 3}
\end{eqnarray}
(\ref{xor:pf 1}) follows by (\ref{xor:p2 sec}) and (\ref{xor:x1 ind}), because 
\begin{eqnarray}
H(X_1|A_2,X_2)&=&H(X_1|X_2)\nonumber\\
\Longleftrightarrow \,\, H(A_2|X_1,X_2)&=&H(A_2|X_2),\nonumber
\end{eqnarray}
and (\ref{xor:pf 2}) follows by (\ref{xor:p3 sec2}) because $X_2$ and $(A_1,A_2)$ are independent, so in particular, $X_2$ and $A_2$ are independent. With the same argument, we notice that \eqref{xor:p3 sec1} implies $H(X_1|A_1)=1$.
Finally, (\ref{xor:pf 3}) follows by (\ref{xor:accuracy}) and (\ref{xor:p3 sec1}) 
because 
\begin{eqnarray}
H(X_1,X_2|A_1,A_2)&=&H(X_1|A_1)\nonumber\\
\Longleftrightarrow \,\, H(X_2,A_2|X_1,A_1)&=&H(A_2|A_1),\label{xor:pf 4}
\end{eqnarray}
and with (\ref{xor:p3 sec2})
\begin{eqnarray}
H(A_2)&\geq&H(A_2|A_1)\nonumber\\
&=&H(X_2,A_2|X_1,A_1) \nonumber\\
&\geq&H(X_2|X_1,A_1)=1.\nonumber
\end{eqnarray}

We now move to the direct part. 
The achievability of the lower bound is obtained with the following protocol. Consider a multi-party computation protocol for $X_1\oplus \cdots \oplus X_m$. Let $A_1,...,A_{m-1}$ be information that $P_m$ receives during the protocol from $P_1,...,P_{m-1}$, respectively.
\begin{enumerate}
\item $P_1$ draws a random number $Z\in\{0,1\}$ uniformly at random and sends $Z$ to $P_m$, $Y_2=Z\oplus X_1$ to $P_2$.
\item For $k\in\{2,...,m-1\}$, $P_k$ receives $Y_{k}$ from $P_{k-1}$, computes $Y_{k+1}=Y_{k}\oplus X_k=Z\oplus X_1\oplus \cdots \oplus X_k$ and sends $Y_{k+1}$ to $P_{k+1}$.
\item $P_m$ receives $Y_{m}$ from $P_{m-1}$ and computes $Y_{m}\oplus X_m\oplus Z=Z\oplus X_1\oplus \cdots \oplus X_m\oplus Z=X_1\oplus \cdots \oplus X_m$.
\end{enumerate}
In this example, ${A_1=Z}$, ${A_2=\cdots =A_{m-2}=\emptyset}$, and ${A_{m-1}=Z\oplus X_1\oplus \cdots \oplus X_{m-1}}$.
Thus, 
\begin{eqnarray}
&&H(A_1,...,A_{m-1}|X_1,...,X_{m-1})\nonumber\\
&&=H(Z,X_1\oplus \cdots \oplus X_{m-1}|X_1,...,X_{m-1})\nonumber\\
&&=H(Z)=1.\nonumber
\end{eqnarray}
\end{proof}

\subsection{The real summation function}
\begin{prop}
For any $m\geq 3$, 
\begin{eqnarray}
\rho (\emptyset,\dots,\emptyset, X_1+ \cdots +X_m)\leq \log_2(m).
\end{eqnarray}
\end{prop}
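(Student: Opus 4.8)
The plan is to prove this upper bound by exhibiting an explicit protocol, mirroring the one-time-pad chain used for the XOR function but carried out over the cyclic group $\mathbb{Z}_m$ rather than $\F_2$. First I would have $P_1$ draw a single random number $Z$ uniformly on $\mathbb{Z}_m$, send $Z$ privately to $P_m$, and send $Y_2 = Z + X_1 \bmod m$ to $P_2$. Each intermediate party $P_k$, for $k = 2, \dots, m-1$, would receive $Y_k$, compute $Y_{k+1} = Y_k + X_k \bmod m = Z + X_1 + \cdots + X_k \bmod m$, and forward it to $P_{k+1}$. Finally $P_m$ receives $Y_m = Z + X_1 + \cdots + X_{m-1} \bmod m$, computes $Y_m - Z \bmod m$, and adds its own input $X_m$.

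The crucial observation, and the reason the modulus can be taken to be $m$ rather than $m+1$, is that $P_m$ already holds $X_m$, so it only needs to recover $S' = X_1 + \cdots + X_{m-1}$, which takes values in $\{0, 1, \dots, m-1\}$, exactly $m$ possibilities. Since $S' \leq m-1 < m$, the reduction $S' \bmod m$ is invertible, so $P_m$ recovers $S'$ exactly and hence the real sum $S = S' + X_m$. This settles accuracy, and it is precisely where the value $\log_2 m$ (rather than $\log_2(m+1)$) originates.

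For security I would check each party separately. For an intermediate party $P_k$ with $2 \leq k \leq m-1$, the only message seen is $Y_k = Z + (X_1 + \cdots + X_{k-1}) \bmod m$; since $Z$ is uniform on $\mathbb{Z}_m$ and independent of the inputs, $Y_k$ is uniform on $\mathbb{Z}_m$ irrespective of the inputs, hence independent of $X_{\sim k}$, matching the requirement that $f_k = \emptyset$. Party $P_1$ sees only its own randomness and input, and $P_m$ sees the pair $(Z, Y_m)$; conditioned on $S'$, this pair equals $(Z, Z + S' \bmod m)$ with $Z$ acting as pure independent noise, so it reveals exactly $S' = S - X_m$ and nothing more, which is exactly what knowledge of $S$ and $X_m$ already permits. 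These observations yield the security identities of the second definition.

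Finally, for the randomness cost, every message and every view is a deterministic function of $(X_1, \dots, X_m, Z)$, and $Z$ is itself recoverable from the views, so $H(Y_1(\Pi), \dots, Y_m(\Pi) \mid X_1, \dots, X_m) = H(Z) = \log_2 m$. I expect the only genuinely delicate point to be the security verification for $P_m$, namely arguing that observing both the pad $Z$ and the masked partial sum leaks no more than $S'$, together with the justification that the modulus $m$ (not $m+1$) is sufficient; the remaining steps are routine one-time-pad arguments.
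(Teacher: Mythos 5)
Your proposal is correct and follows essentially the same protocol as the paper: a one-time-pad chain over $\mathbb{Z}_m$ with $P_1$ sending the pad $Z$ to $P_m$ and the masked partial sums travelling along the chain, yielding randomness cost $H(Z)=\log_2 m$. You are in fact somewhat more careful than the paper on two points it leaves implicit --- the explicit justification that modulus $m$ suffices because $P_m$ need only recover $X_1+\cdots+X_{m-1}\in\{0,\dots,m-1\}$ before adding its own input, and the party-by-party security verification --- but these are refinements of the same argument, not a different route.
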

Note that for $m=2$ the problem is trivial, the party with the summation can always recover the other input.
\begin{proof}
The upper bound of the randomness is shown by the achievability. Consider a multi-party computation protocol for $X_1+\cdots+X_m$. Let $A_1,...,A_{m-1}$ be information that $P_m$ receives during the protocol from $P_1,...,P_{m-1}$, respectively.
\begin{enumerate}
\item $P_1$ draws a random number $Z\in\{0,1,\dots,m-1\}$ uniformly at random and sends $Z$ to $P_m$, $Y_2=Z+X_1\mod m$ to $P_2$.
\item For $k\in\{2,...,m-1\}$, $P_k$ receives $Y_{k}$ from $P_{k-1}$, computes the addition in modulo $m$
\begin{eqnarray}
Y_{k+1}=Y_{k}+X_k=Z+X_1+\cdots+X_k\nonumber
\end{eqnarray}
and sends $Y_{k+1}$ to $P_{k+1}$.
\item $P_m$ receives $Y_{m}$ from $P_{m-1}$ and computes the addition and subtraction in modulo $m$
\begin{eqnarray}
Y_{m}+X_m-Z=Z+X_1+\cdots+ X_m-Z=X_1+\cdots+X_m.\nonumber
\end{eqnarray}
\end{enumerate}
In this example, ${A_1=Z}$, ${A_2=\cdots =A_{m-2}=\emptyset}$, and ${A_{m-1}=Z+X_1+\cdots+X_{m-1}}$.
Thus, 
\begin{eqnarray}
&&H(A_1,\dots,A_{m-1}|X_1,...,X_{m-1})\nonumber\\
&&=H(Z,X_1+\cdots+X_{m-1}|X_1,\dots,X_{m-1})\nonumber\\
&&=H(Z)\nonumber =\log_2 m.\nonumber
\end{eqnarray}
\end{proof}

\begin{conjecture}\label{conj1}
For any $m\geq 3$, 
\begin{eqnarray}
\rho (\emptyset,\dots,\emptyset, X_1+ \cdots +X_m)=\log_2(m).
\end{eqnarray}
\end{conjecture}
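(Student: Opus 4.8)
The plan is to prove the converse $\rho(\emptyset,\dots,\emptyset,X_1+\cdots+X_m)\ge \log_2 m$, since the matching upper bound is already supplied by the preceding proposition; the strategy is to lift the entropy‑inequality argument used for the XOR function to the integer‑valued summation. Let $P_m$ be the party computing $f_m=X_1+\cdots+X_m$. Since $P_m$ knows $X_m$, computing the real sum is equivalent to computing $S'=X_1+\cdots+X_{m-1}\in\{0,\dots,m-1\}$, which takes exactly $m$ values. Write $A_i$ for all information $P_m$ receives from $P_i$ and $A=(A_1,\dots,A_{m-1})$. Because $A$ is a function of the joint view $(Y_1(\Pi),\dots,Y_m(\Pi))$, conditioning on all inputs yields $\rho\ge H(A\mid X_1,\dots,X_m)$, so it suffices to show $H(A\mid X_1,\dots,X_m)\ge \log_2 m$.

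The heart of the argument is a structural lemma isolating what accuracy and security impose on $A$. Each intermediate party $P_i$ ($i<m$) computes $f_i=\emptyset$, so its security requirement reads $Y_i\perp X_{\sim i}\mid X_i$, and hence the conditional law of $A_i$ given all inputs depends on $X_i$ alone. Security on $P_m$ forces the conditional law of $A$ given the inputs to depend only on $(X_m,S')$, and in particular so does each marginal law of $A_i$. Comparing these two descriptions on a pair of input vectors that share the same $(X_m,S')$ but differ in the $i$‑th coordinate — such a pair exists since $m\ge 3$ — I would conclude that $A_i$ is \emph{independent of all inputs}, with some fixed marginal $\mu_i$. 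Meanwhile accuracy says $P_m$ decodes $S'$ from $(A,X_m)$, so for each fixed value of $X_m$ the conditional laws $R_0,\dots,R_{m-1}$ of $A$ given $S'=0,\dots,m-1$ have pairwise disjoint supports, while every $R_s$ has $i$‑th marginal $\mu_i$.

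Fixing $X_m$ and averaging over $X_{\sim m}$ then gives $H(A\mid X_1,\dots,X_m)=\sum_s \Pr(S'=s)\,H(R_s)$ with $\Pr(S'=s)=\binom{m-1}{s}2^{-(m-1)}$, so the entire converse reduces to the following claim: any distributions $R_0,\dots,R_{m-1}$ on a product alphabet with pairwise disjoint supports and common coordinate marginals $\mu_1,\dots,\mu_{m-1}$ satisfy $\sum_s \Pr(S'=s)\,H(R_s)\ge \log_2 m$, equivalently $H(A)\ge H(S')+\log_2 m$. This is exactly a perfect‑sharing statement: no single coordinate $A_i$ carries any information about $S'$, yet the tuple determines $S'$, and one must show that realizing this for the $m$‑valued secret $S'$ costs at least $\log_2 m$ bits of residual randomness beyond $H(S')$ — attained by the chain protocol, where $\mu_1=\mu_{m-1}$ is uniform on $\mathbb{Z}_m$ and each $R_s$ is uniform on a shifted diagonal.

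The main obstacle is precisely this last inequality. The per‑coordinate conditions are only marginal constraints, whereas disjointness of supports is a global one, and Shannon‑type inequalities relating $H(A)$, $H(S')$ and the $H(\mu_i)$ do not obviously deliver the tight constant $\log_2 m$, especially given the binomial weighting of the $R_s$ and the integer (rather than $\F_2$) arithmetic, which destroys the group symmetry exploited for XOR. I expect one must either characterize the relevant face of the entropy‑vector region — possibly invoking non‑Shannon inequalities, consistent with the connection to entropy vectors flagged in the introduction — or produce a direct transportation argument showing that $m$ disjoint couplings with identical marginals cannot all be concentrated. Two technical points would also need care: $P_m$'s private randomness $R_m$ and the possibility of randomized or interactive decoding; I would handle these by noting that $R_m$ is independent of $A$ at the point of use, so it cannot lower $H(A\mid X_1,\dots,X_m)$, and by integrating it out so that the disjoint‑support structure of $\{R_s\}$ is preserved.
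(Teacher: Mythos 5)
The statement you are trying to prove is stated in the paper as a \emph{conjecture}: the paper establishes only the achievability $\rho(\emptyset,\dots,\emptyset,X_1+\cdots+X_m)\le\log_2(m)$ (Proposition~2) and explicitly says that ``showing that $\log_2(m)$ is necessary is not established,'' suggesting only that a logarithmic bound might follow from the XOR-style entropy argument. So there is no paper proof to compare against, and the question is whether your converse argument closes the gap. It does not. Your reduction is carried out in the spirit the paper recommends, and its intermediate steps are essentially sound: from the security of each intermediate party you get $A_i\perp X_{\sim i}\mid X_i$, from the security of $P_m$ you get that the law of $A$ given the inputs depends only on $(X_m,S')$, and combining the two (using that $S'=1$ is compatible with both values of $X_i$ once $m\ge 3$) each $A_i$ is indeed independent of all inputs; accuracy then gives the disjoint-support structure of the conditional laws $R_s$. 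But the entire difficulty of the conjecture is concentrated in the final claim --- that $m$ distributions on a product alphabet with pairwise disjoint supports and identical coordinate marginals must satisfy $\sum_s \Pr(S'=s)H(R_s)\ge\log_2 m$ --- and you do not prove it; you explicitly label it ``the main obstacle'' and list possible strategies (non-Shannon inequalities, a transportation argument) without executing any. A reduction of one open statement to another open statement is not a proof, and note that the easy Shannon-type manipulations only yield $H(A\mid S',X_m)\ge\max_i H(A_i)$ together with $H(S')=H(\mathrm{Bin}(m-1,1/2))<\log_2 m$, which is strictly weaker than what you need unless you can additionally force some $H(A_i)\ge\log_2 m$.

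Two further points deserve more care than your closing paragraph gives them. First, $P_m$'s private randomness $R_m$ is \emph{not} in general independent of $A$ ``at the point of use'': $P_m$ may send messages derived from $R_m$ to the other parties, whose replies (which constitute $A$) then depend on $R_m$, so the accuracy condition $H(S'\mid Y_m(\Pi),X_m)=0$ does not immediately yield $H(S'\mid A,X_m)=0$ and the disjoint-support picture for the $R_s$ needs an actual argument. Second, even granting disjoint supports, the weighting of the $R_s$ is binomial rather than uniform, so a bound of the form $\min_s H(R_s)\ge\log_2 m$ (the natural secret-sharing-type statement) would suffice but is itself nontrivial; the loss of the group structure that made the XOR converse a short chain of Shannon inequalities is exactly why the paper leaves this as a conjecture.
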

While showing that $\log_2(m)$ is necessary is not established, we believe that a logarithmic bound in $m$ can be obtained with similar argument as for the XOR function. In particular, this can be written as an inequality over entropic vectors, for which Shanon-type inequalities may or may not suffice. 

\begin{remark}\label{remark_mult}
While this paper focuses on summation functions, similar methods can be used for multiplications. Consider for example the case where three parties wish to compute $(X_1X_2, X_1X_2,X_1X_2)$, i.e., the product of the first two parties' bits. This can be achieved with a protocol requiring $4$ bits of randomness. One possibility is two break each number into three shares, two of which being uniformly distributed, i.e., $X_1=X_1(1)+X_1(2)+X_1(3)$ and $X_2=X_2(1)+X_2(2)+X_2(3)$, and requiring party $P_1$ and $P_2$ to exchange all the bits $X_i(j)$ for $i,j \in \{1,2\}$ and to provide the bit $X_1(3), X_2(3)$ to party 3. Then each party has a component of the product $X_1 X_2$ which can be transmitted to $P_3$ for the function computation. 
\end{remark}

%
%
\section{Discrete memoryless secure multiparty computations}
We now define a probabilistic model for the parties' inputs, and leverage the correlations among these inputs to obtain protocols which are secure with high probability in the limit of large sequences, without requiring private communication channels between all parties. 


\begin{definition}
Let $n \geq 1$ and $(X_{(1)}^n, \dots, X_{(m)}^n)$ be i.i.d.\ sequences with a joint distribution $\mu$ on $\F_2^m$. 
Let $P_1,\dots,P_m$ be $m$ parties, where party $P_i$ possesses the input sequence $X^n_{(i)}$ and the distribution $\mu$, for $i \in [m]$. 

We are now interested in sequences of deterministic protocols, defined on the HCN model without item 2), where parties exchange only deterministic functions of their inputs (no action a)). 
A sequence of deterministic protocols $\{\Pi_n\}_{n \geq 1}$ computes asymptotically accurately and privately the deterministic and discrete function sequence 
$\{(f_{(1)}^n,\dots,f_{(m)}^n)\}_{n \geq 1}$ if 
\begin{itemize}
\item {\bf [Asymptotic accuracy]} Each party $P_i$ can compute $f^n_{(i)}(X_{(1)}^n, \dots, X_{(m)}^n)$ with a vanishing error probability, i.e., from the view of the protocol $Y_{(i)}(\Pi_n)$ and its input $X_{(i)}^n$, party $i$ can compute an estimate $\hat{f}_{(i)}^n$ such that 
\begin{align}
\pp \{ \hat{f}_{(i)}^n \neq  f^n_{(i)}(X_{(1)}^n, \dots, X_{(m)}^n)\} \to 0, \quad \text{as } n \to \infty,
\end{align}
\item  {\bf [Asymptotic security]} Each party $P_i$ cannot recover the input of another party, i.e., for any $j \neq i$, there is no function $\hat{X}_{(j)}^n$ of $Y_{(i)}(\Pi_n)$ and $X_{(i)}^n$ such that  
\begin{align}
\lim_{n \to \infty} \pp \{ \hat{X}_{(j)}^n \neq X_{(j)}^n \} \to  0.
\end{align}
\end{itemize}
\end{definition}
Note that the above definition of security is weaker than its counter-part in the one-shot setting by more than just its asymptotic nature: it is not forbidden to just leak some information, but to actually recover an input sequence.  

Given a set of functions $\{f^n_{(1)}, \dots, f^n_{(m)}\}$, our goal is to study for which distributions $\mu$ on $\F_2^m$ it is possible to obtain a protocol computing the functions accurately and securely in the above asymptotic sense. 

\subsection{The XOR function}

In this section, we introduce an asymptotically accurate and secure protocol for the modulo-2 sum of three parties inputs. Namely $f^n_{(i)}=X^n\oplus Y^n \oplus Z^n$ for $i=1,2,3$. 

\begin{definition}
Let $X$ and $Y$ be binary random variables with a joint distribution $\mu$ on $\F_2^2$. We call the distribution \emph{additively-correlated} if
\begin{eqnarray}
H(X,Y)-2H(X\oplus Y)>0. \label{eqn:strong correlation}
\end{eqnarray}
Let $X,Y,Z$ be binary random variables with a joint distribution $\mu$ on $\F_2^3$. We call the distribution \emph{additively-correlated} if
at least one pair of the random variables is additively-correlated. 
\end{definition}
For example, $X\sim Ber(0.5)$, $Z\sim Ber(p)$, and $Y=X\oplus Z$, where $p<0.5$ satisfies \eqref{eqn:strong correlation}.

\begin{prop}\label{polar-prop}
Let $n \geq 1$ and $(X^n, Y^n, Z^n)$ be i.i.d.\ sequences with a joint distribution $\mu$ on $\F_2^m$ which is pariwise-additively-correlated. 
Then the ASP protocol defined below allows to compute asymptotically accurately and securely the function $X^n \oplus Y^n \oplus Z^n$.  
\end{prop}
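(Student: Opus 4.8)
The plan is to verify the two defining requirements for the ASP protocol separately: \emph{asymptotic accuracy}, which rests only on polar source coding, and \emph{asymptotic security}, which is where the additive-correlation hypothesis does all the work. Assume without loss of generality that the additively-correlated pair is $(X,Y)$ and write $W=X\oplus Y$; the hypothesis then reads $H(X,Y)-2H(W)=H(X\mid W)-H(W)>0$. Let $G_n$ denote the polar transform on $\F_2^n$. The only structural fact I would use repeatedly is its linearity, $(X^n\oplus Y^n)G_n=X^nG_n\oplus Y^nG_n$, so that additive shares of the two private sequences combine into the polar coordinates of the sum $W^n$.

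For accuracy I would invoke source polarization for the binary source $W$. Let $\mathcal H\subseteq[n]$ be its high-entropy set, i.e. the indices $i$ with $H\big((W^nG_n)_i\mid (W^nG_n)^{i-1}\big)$ bounded away from $0$; polarization gives $|\mathcal H|/n\to H(W)$ while the complementary coordinates become asymptotically deterministic functions of their past. Since $(W^nG_n)_{\mathcal H}=(X^nG_n)_{\mathcal H}\oplus(Y^nG_n)_{\mathcal H}$, the shares transmitted by $P_1$ and $P_2$ reconstruct exactly the coordinates on $\mathcal H$, after which successive-cancellation decoding recovers $W^n=X^n\oplus Y^n$ with error probability $O(2^{-n^{\beta}})$ for some $\beta<1/2$. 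Forming $W^n\oplus Z^n$ and broadcasting it then delivers $X^n\oplus Y^n\oplus Z^n$ to every party, which is the asymptotic-accuracy claim.

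For security I would reduce each requirement to a single-letter entropy count through Fano's inequality: if $\hat X^n$ is any estimate of a target sequence $X^n\in\F_2^n$ built from a party's view $V$, then $H(X^n\mid V)\le 1+n\,\pp\{\hat X^n\neq X^n\}$, so a lower bound $H(X^n\mid V)\ge cn$ with $c>0$ forces $\liminf_n\pp\{\hat X^n\neq X^n\}\ge c$ and rules out vanishing-error recovery. The critical party is the one that assembles $W^n$: beyond its own input it learns only the $\mathcal H$-shares, whose total entropy is at most $nH(W)+o(n)$. Hence
\[
H(X^n\mid V)\ \ge\ H(X^n\mid W^n)-nH(W)-o(n)\ =\ n\big(H(X\mid W)-H(W)\big)-o(n),
\]
which equals $n\big(H(X,Y)-2H(W)\big)-o(n)$ and is therefore $\Omega(n)$ precisely because $(X,Y)$ is additively-correlated. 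For the two parties that merely receive the broadcast, each learns only the modulo-two sum of the two sequences it does not hold, and an identical count — now with no shares to subtract — leaves $\Omega(n)$ residual uncertainty about either summand, giving security against all parties.

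The main obstacle is the middle step of the security bound. Making $H(X^n\mid V)\ge H(X^n\mid W^n)-H(\text{shares})$ rigorous requires two things that go beyond counting: first, that the particular polar coordinates revealed — not just their number — leave linear residual entropy, which I would control by pinning $H(\text{shares})\le nH(W)+o(n)$ through the polarization characterization of $\mathcal H$; and second, that conditioning on the assembling party's own input (correlated with $X$ and $Y$) does not erase the gap, i.e. that $H(X^n\mid W^n,\text{own input})$ stays linear. I would address the latter by choosing the assembling party so that the additively-correlated pair and the residual-entropy estimate are compatible, and by using the fact that the security definition only demands the error stay bounded away from zero, so any constant fraction of surviving entropy suffices and the crude subtractive bound is all that is needed.
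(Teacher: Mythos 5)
Your proposal follows essentially the same route as the paper: accuracy via source polarization of $W=X\oplus Y$ together with the linearity of $G_n$, and security via Fano's inequality combined with the count $H(\tilde{X}[R_{\e,n}(X\oplus Y)],\tilde{Y}[R_{\e,n}(X\oplus Y)])\le 2|R_{\e,n}(X\oplus Y)|=2nH(X\oplus Y)+o(n)$, leaving residual uncertainty $n(H(X,Y)-2H(X\oplus Y))-o(n)$ exactly as in the paper's chain of inequalities (you apply Fano to $X^n$ alone where the paper applies it to the pair $(X^n,Y^n)$ after noting the two recovery events are equivalent given $W^n$, a cosmetic difference). The obstacle you flag at the end --- that $P_3$'s own input $Z^n$ may be correlated with $(X^n,Y^n)$ and could erode the entropy gap --- is a genuine concern, but it is one the paper itself does not address either: its Fano bound conditions only on the transmitted shares and never on $Z^n$, so your proposal is not missing anything present in the paper's argument.
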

This provides an achievability result. 

\begin{remark}
The ASP protocol is based on polar codes. The linearity of the code is crucial to compute the XOR function. The protocol could probably be adapted with other linear codes, such as random linear codes, however, polar codes provide in addition a low-complexity protocol, and are also insightful as a proof technique. 
\end{remark}

%

We next recall the source polarization results and then describe the protocol. 

\subsection{Preliminaries on polar codes}

For $n$ a power of $2$, define $G_n=\begin{pmatrix} 1 & 0 \\ 1& 1 \end{pmatrix}^{\otimes \log _2 (n)}$, where $A^{\otimes k}$ denotes the matrix obtained by taking $k$ Kronecker products of matrix $A$ with itself.
\begin{thm}\cite{ari3}
\label{thm:1}
Let $X^n=[X_1,...,X_n]\overset{iid}{\sim}$ Bernoulli(p), where $n$ is a power of $2$, and let $\tilde{X}^n=X^nG_n$. Then, for any $\epsilon \in (0,1/2]$,  
\begin{eqnarray}
|\{j\in[n]:H(\tilde{X}_j|\tilde{X}^{j-1})\in (\epsilon,1-\epsilon)\}|=o(n),\nonumber
\end{eqnarray} 
where $H(\tilde{X}_j|\tilde{X}^{j-1})$ represents the conditional Shannon entropy of $\tilde{X}_j$ given $\tilde{X}^{j-1}=[\tilde{X}_1,...,\tilde{X}_{j-1}]$. The above still holds if $\e=O(2^{-n^{\beta}})$, $\beta <1/2$. 
\end{thm}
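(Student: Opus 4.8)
The plan is to recognize this as Arıkan's source-polarization theorem and to reduce the deterministic counting statement to a statement about a bounded martingale on a binary tree. First I would expose the recursive structure underlying $G_n = G_2^{\otimes k}$, where $n = 2^k$ and $G_2 = \bigl(\begin{smallmatrix} 1 & 0 \\ 1 & 1 \end{smallmatrix}\bigr)$. The transform is built by repeatedly applying the single-step map carrying two independent copies $(V_1,V_2)$ of a source (with its side information) to $(V_1 \oplus V_2,\, V_2)$. Unrolling the recursion, the multiset $\{H(\tilde X_j \mid \tilde X^{j-1}) : j \in [n]\}$ coincides with the multiset of conditional entropies of the $2^k$ sources obtained along the $2^k$ distinct $\pm$-paths of a depth-$k$ binary tree. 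Consequently, if $B_1,\dots,B_k$ are i.i.d.\ uniform $\pm$ choices and $H_k$ denotes the entropy of the source reached after these choices, then $\frac{1}{n}\,|\{j : H(\tilde X_j \mid \tilde X^{j-1}) \in (\epsilon,1-\epsilon)\}| = \prob(H_k \in (\epsilon,1-\epsilon))$, and it suffices to show the right-hand probability vanishes.

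Second, I would verify that $\{H_k\}$ is a bounded martingale. Since $(V_1 \oplus V_2, V_2)$ is an invertible function of $(V_1,V_2)$, the chain rule gives the conservation law $H(V_1 \oplus V_2) + H(V_2 \mid V_1 \oplus V_2) = H(V_1,V_2) = 2H(V)$, where the $-$ branch $H(V_1 \oplus V_2) \geq H(V)$ is the larger value and the $+$ branch $H(V_2 \mid V_1 \oplus V_2) \leq H(V)$ the smaller one. Averaging the two branches with probability $1/2$ each returns $H_k$, so $\E[H_{k+1} \mid \mathcal{F}_k] = H_k$, while $H_k \in [0,1]$. By the martingale convergence theorem, $H_k \to H_\infty$ almost surely and in $L^1$, hence $\E|H_{k+1} - H_k| \to 0$.

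Third, and this is the crux, I would rule out $H_k$ lingering in the middle range. Because the two branches deviate symmetrically from $H_k$ by a common spread $d_k := H(V_1 \oplus V_2) - H_k = H_k - H(V_2 \mid V_1 \oplus V_2) \geq 0$, one has $|H_{k+1} - H_k| = d_k$ deterministically given $\mathcal{F}_k$, so $\E[d_k] \to 0$. The technical lemma I must establish is that there is $\delta(\epsilon) > 0$ with $d_k \geq \delta(\epsilon)$ whenever $H_k \in (\epsilon,1-\epsilon)$: the transform strictly separates entropies away from the fixed points $\{0,1\}$. Then $\delta(\epsilon)\,\prob(H_k \in (\epsilon,1-\epsilon)) \leq \E[d_k] \to 0$ forces $\prob(H_k \in (\epsilon,1-\epsilon)) \to 0$, i.e.\ the count is $o(n)$. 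The main obstacle is precisely this strict-separation estimate, which does not follow from convexity alone and requires quantifying the entropy gain of the XOR combination in terms of the underlying total-variation or Bhattacharyya distance of the source.

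Finally, for the sharpened claim with $\epsilon = O(2^{-n^{\beta}})$, $\beta < 1/2$, the martingale argument gives only $o(n)$ and must be replaced by a rate analysis via the Bhattacharyya parameter $Z_k$, which satisfies $Z_{k+1} = Z_k^2$ on the $+$ branch and $Z_{k+1} \leq 2 Z_k$ on the $-$ branch. The squaring on $+$ steps produces doubly-exponential decay along any path carrying a linear fraction of $+$ steps; a Chernoff bound on the number of $+$ steps among $B_1,\dots,B_k$ (binomial with mean $k/2$) shows that all but a vanishing fraction of leaves accumulate enough $+$ steps to drive $Z_k \leq 2^{-n^{\beta}}$, and relating $Z_k$ back to the conditional entropy yields the stated rate. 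This is the Arıkan--Telatar rate-of-polarization refinement.
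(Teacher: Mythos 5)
The paper offers no internal proof of Theorem~\ref{thm:1} to compare against: it is imported verbatim from Ar{\i}kan's source-polarization paper \cite{ari3}. Your proposal is, in substance, a correct reconstruction of the proof in that literature: the identification of the multiset $\{H(\tilde{X}_j\mid \tilde{X}^{j-1})\}_{j\in[n]}$ with the leaves of the depth-$k$ recursive $\pm$ tree (correctly carrying the side information that accumulates along $+$ branches), the conservation law $H^- + H^+ = 2H$ making $H_k$ a bounded martingale with symmetric one-step spread $d_k$, and the chain ``bounded martingale $\Rightarrow$ $L^1$ convergence $\Rightarrow$ $\E[d_k]\to 0$ $\Rightarrow$ strict separation empties the middle interval'' is exactly the standard argument. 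The strict-separation lemma you flag as the crux is the right thing to flag, and it is supplied in the literature by Mrs.~Gerber's lemma: with $a = h^{-1}(H(V\mid Y))$ one has $H(V_1\oplus V_2\mid Y_1,Y_2)\geq h(2a(1-a))$, which exceeds $H(V\mid Y)$ by a uniform $\delta(\epsilon)>0$ on $H\in(\epsilon,1-\epsilon)$; alternatively one can follow Ar{\i}kan's original route entirely through the Bhattacharyya recursion and conclude $Z_\infty\in\{0,1\}$ from $Z^+ = Z^2$ and one-step convergence.

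The one substantive soft spot is the rate refinement. A Chernoff count of $+$ steps by itself does not yield $Z_k\leq 2^{-n^{\beta}}$: squaring is only profitable once $Z$ is already bounded away from $1$, and at moderate $Z$ the $-$-step doubling can cancel the $+$-step squaring, so a path with nearly half $+$ steps need not have small $Z$ unless it starts small. The Ar{\i}kan--Telatar proof is a two-stage bootstrap: first use plain polarization to get $Z_m\leq\zeta$ for a fixed small constant $\zeta$ with probability approaching $\prob(Z_\infty = 0)$, and only then run the large-deviations argument on $\log_2 Z$ (which doubles on $+$ steps and worsens by at most an additive constant on $-$ steps). Moreover, the interval $(\epsilon,1-\epsilon)$ with $\epsilon = O(2^{-n^{\beta}})$ is two-sided at exponential precision: for the indices polarizing to $1$ you must show $H\geq 1-2^{-n^{\beta}}$, which requires applying the same rate analysis to the complementary quantity (the parameter tending to $1$, or equivalently the dual process), and finally translating between $Z$ and $H$ via the standard two-sided inequalities, absorbing polynomial losses by slightly reducing $\beta$. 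These are known patches rather than conceptual obstacles, so your outline matches the intended proof of the cited result.
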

Theorem \ref{thm:1} says that, except for a vanishing fraction, all conditional entropies $H(\tilde{X}_j|\tilde{X}^{j-1})$ tend to either $0$ or $1$. Also, notice that since $G_n$ is invertible, hence $nH(p)=H(X^n)=H(\tilde{X}^n)$, 
and defining 
\begin{eqnarray}
R_{\epsilon,n}(X):=\{i\in[n]:H(\tilde{X}_i|\tilde{X}^{i-1})\geq \epsilon\},\label{eqn:polarset}
\end{eqnarray}
we have
\begin{eqnarray}
&&\frac{1}{n}| R_{\epsilon,n}(X) | \rightarrow H(p), \label{eqn:opt rate}
\end{eqnarray}
where $H(p)$ is the entropy of the Bernoulli(p) distribution.
Note that $R_{\e,n}(X)$ should be written $R_{\e,n}(p)$: it is not a function of a random variable but $X$, but here $X$ stands for the marginal distribution of $X^n$. This notation will be handy below. Since $R_{\e,n}(X)$ contains all the non-deterministic components of $\tilde{X}^n$, it is possible to reconstruct $\tilde{X}^n$ from $\tilde{X}[R_{\e,n}(X)]$ with a vanishing probability of error. This requires setting $\e$ small enough, in particular one can chose $\e=O(2^{-n^{\beta}})$, $\beta <1/2$. With polar codes, one can addition obtain an efficient decoding algorithm which runs in $O(n \log(n))$. 



\subsection{The asymptotically secure polar (ASP) protocol}
All the parties know $\mu$ and set $\e=\e_n=2^{-n^{0.49}}$. Since $\mu$ is additively-correlated, assume w.l.o.g\ that the inputs of $P_1$ and $P_2$ are additively-correlated. 
\begin{enumerate}
\item Inputs at time 0: \\party 1: $X^n$, party 2: $Y^n$, party 3: $Z^n$.
\item At time 1:
\begin{itemize}
 \item $P_1$ computes $\tilde{X}^n=X^nG_n$ and $R_{\epsilon,n}(X\oplus Y)$,  
 \item $P_2$ computes $\tilde{Y}^n=Y^nG_n$ and $R_{\epsilon,n}(X\oplus Y)$.
\end{itemize} 
\item At time 2:
\begin{itemize}
 \item $P_1$ sends $\tilde{X}[R_{\epsilon,n}(X\oplus Y)]$ to $P_3$,
 \item $P_2$ sends $\tilde{Y}[R_{\epsilon,n}(X\oplus Y)]$ to $P_3$.
\end{itemize} 
\item At time 3:  
 $P_3$ computes $\tilde{X}[R_{\epsilon,n}(X\oplus Y)]\oplus \tilde{Y}[R_{\epsilon,n}(X\oplus Y)]=(\tilde{X}\oplus \tilde{Y})[R_{\epsilon,n}(X\oplus Y)]$ and decodes $\tilde{X}^n\oplus \tilde{Y}^n$ from $(\tilde{X}\oplus \tilde{Y})[R_{\epsilon,n}(X\oplus Y)]$ using the polar decoding algorithm in \cite{ari3}. Let $\hat{\tilde{X}}^n\oplus \hat{\tilde{Y}}^n$ be the decoded vector. $P_3$ computes $\hat{X}^n\oplus \hat{Y}^n=(\hat{\tilde{X}}^n\oplus \hat{\tilde{Y}}^n)G_n^{-1}$. 
 \item At time 4: $P_3$ sends $\hat{X}^n\oplus \hat{Y}^n + Z^n$ to $P_1$ and $P_2$.
\end{enumerate}

\subsection{Proof of Proposition \ref{polar-prop}} 
\begin{lemma}
The SPC protocol is \emph{asymptotically accurate}.
\end{lemma}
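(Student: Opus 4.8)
The plan is to reduce asymptotic accuracy to the source-polarization guarantee of Theorem~\ref{thm:1}, applied to the i.i.d.\ sequence $W^n := X^n \oplus Y^n$. First I would note that each coordinate $W_i = X_i \oplus Y_i$ is an i.i.d.\ draw from $\mathrm{Ber}(q)$ with $q = \pp\{X \oplus Y = 1\}$, since $(X,Y,Z)$ is i.i.d.\ across positions; thus $W^n$ is precisely the kind of i.i.d.\ Bernoulli source covered by the preliminaries, and $R_{\epsilon,n}(X \oplus Y)$ is exactly its high-entropy (information) set. The key structural observation is the linearity of $G_n$: since $\tilde{X}^n = X^n G_n$ and $\tilde{Y}^n = Y^n G_n$, we get $\tilde{X}^n \oplus \tilde{Y}^n = (X^n \oplus Y^n) G_n = \tilde{W}^n$, so the coordinates $(\tilde{X} \oplus \tilde{Y})[R_{\epsilon,n}(X \oplus Y)]$ that $P_3$ assembles from its two received messages are exactly $\tilde{W}[R_{\epsilon,n}(X \oplus Y)]$.

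Next I would invoke the reconstruction statement recalled after Theorem~\ref{thm:1}: because $R_{\epsilon,n}(X \oplus Y)$ contains all non-deterministic components of $\tilde{W}^n$ and the protocol fixes $\epsilon = \epsilon_n = 2^{-n^{0.49}}$ (so $\beta = 0.49 < 1/2$), the successive-cancellation decoder of \cite{ari3} recovers $\tilde{W}^n$ from $\tilde{W}[R_{\epsilon,n}(X \oplus Y)]$ with a probability of error $\delta_n \to 0$. Indeed, every frozen coordinate $j \notin R_{\epsilon,n}(X \oplus Y)$ satisfies $H(\tilde{W}_j \mid \tilde{W}^{j-1}) < \epsilon_n$ and is thus an essentially deterministic function of the past, so a union bound over the frozen set keeps the total error $o(1)$ in this regime. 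Writing the decoder output as $\hat{\tilde{W}}^n = \hat{\tilde{X}}^n \oplus \hat{\tilde{Y}}^n$, on the success event we have $\hat{X}^n \oplus \hat{Y}^n = \hat{\tilde{W}}^n G_n^{-1} = W^n = X^n \oplus Y^n$, whence $P_3$ forms $(\hat{X}^n \oplus \hat{Y}^n) \oplus Z^n = X^n \oplus Y^n \oplus Z^n$ and computes the target function exactly; its error probability is therefore at most $\delta_n \to 0$.

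Accuracy for $P_1$ and $P_2$ then follows immediately: at time~4 each receives $P_3$'s output $(\hat{X}^n \oplus \hat{Y}^n) \oplus Z^n$, which coincides with the true value $X^n \oplus Y^n \oplus Z^n$ on the same decoding-success event, so their error probability equals $P_3$'s and vanishes. I would emphasize that accuracy uses none of the additive-correlation hypothesis $H(X,Y) - 2H(X \oplus Y) > 0$---that condition is needed only to control the size of the information set for security---so the computation succeeds for every $\mu$. The sole quantitative ingredient, and what I expect to be the main (indeed only) obstacle, is the decoding-error bound $\delta_n \to 0$: one must check that with $\epsilon_n = 2^{-n^{0.49}}$ the relevant Bhattacharyya-type parameters summed over the frozen set are $o(1)$, which is exactly the $\beta < 1/2$ regime supplied by Theorem~\ref{thm:1} and the error-exponent result of \cite{ari3}. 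The remaining pieces---the i.i.d.\ reduction, the linearity of $G_n$, and the propagation to $P_1$ and $P_2$---are routine.
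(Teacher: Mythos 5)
Your proof is correct and follows essentially the same route as the paper's: reduce accuracy to recovering $X^n\oplus Y^n$ (since $Z^n$ is never encoded), use the linearity of $G_n$ to identify $P_3$'s received data with $\tilde{W}[R_{\epsilon,n}(X\oplus Y)]$ for $W^n=X^n\oplus Y^n$, and invoke Theorem~\ref{thm:1} with $\epsilon_n=2^{-n^{0.49}}$ for the vanishing decoding error. The paper states this in one line as ``a direct application of Theorem~\ref{thm:1}''; you have simply filled in the routine details.
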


\begin{proof}
Since $Z^n$ is not encoded during the protocol, it is enough to prove that
\begin{eqnarray}
Pr(\hat{X}^n\oplus \hat{Y}^n\neq X^n\oplus Y^n)\xrightarrow[n\rightarrow \infty]{} 0.
\end{eqnarray}
This is a direct application of Theorem \ref{thm:1}, as in \cite{ari3}.   
\end{proof}

\begin{lemma}
The SPC is \emph{asymptotically secure}.
\end{lemma}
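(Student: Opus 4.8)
The plan is to establish the three required non-recovery statements (one for each party acting as eavesdropper) by lower-bounding, by a quantity linear in $n$, the conditional entropy of the targeted input given the eavesdropper's view, and then invoking Fano's inequality: since every input lives in $\F_2^n$, a bound $H(A^n \mid \text{view}) \geq cn$ with $c>0$ forces any estimator $\hat A^n$ to satisfy $\pp\{\hat A^n \neq A^n\} \geq c - 1/n$, which stays bounded away from $0$, so recovery with vanishing error is impossible. The only party that receives encoded shares is $P_3$, so that is the main case; $P_1$ and $P_2$ receive only their own input together with the final broadcast, so those cases are lighter.

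\emph{The case of $P_3$.} Its view is $V_3=(Z^n,\tilde X[R],\tilde Y[R])$ with $R=R_{\e,n}(X\oplus Y)$, hence $|R|=nH(X\oplus Y)(1+o(1))$ by \eqref{eqn:opt rate}. Writing $\tilde U=\tilde X\oplus\tilde Y$, we have $\tilde Y[R]=\tilde X[R]\oplus\tilde U[R]$, and $\tilde U[R]$ is a function of $U^n:=X^n\oplus Y^n$, so the tuple $(Z^n,U^n,\tilde X[R])$ determines $V_3$; conditioning on it can therefore only lower entropy, giving $H(X^n\mid V_3)\geq H(X^n\mid Z^n,U^n,\tilde X[R])$. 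Since $\tilde X[R]$ is a deterministic function of $X^n$,
\[
H(X^n\mid Z^n,U^n,\tilde X[R]) = H(X^n\mid Z^n,U^n) - H(\tilde X[R]\mid Z^n,U^n)\geq nH(X\mid Z,X\oplus Y) - |R|,
\]
so that $H(X^n\mid V_3)\geq n\big(H(X\mid Z,X\oplus Y) - H(X\oplus Y)\big)+o(n)$. Rewriting the additive-correlation hypothesis through the bijection $(X,Y)\leftrightarrow(X\oplus Y,X)$, namely $H(X,Y)=H(X\oplus Y)+H(X\mid X\oplus Y)$, it becomes $H(X\mid X\oplus Y)>H(X\oplus Y)$, and the bracket above is positive provided $Z$ carries sufficiently little information about $X$ given $X\oplus Y$; this is precisely where the pairwise hypothesis enters. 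Finally, because $V_3$ decodes $U^n$ with high probability, recovering $X^n$ and recovering $Y^n$ are asymptotically equivalent, so the single bound rules out recovery of either $P_1$'s or $P_2$'s input.

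\emph{The cases of $P_1$ and $P_2$.} Party $P_1$ has view $(X^n,S^n)$ with $S^n=X^n\oplus Y^n\oplus Z^n$, so it knows $Y^n\oplus Z^n$ but nothing that separates $Y^n$ from $Z^n$; as the final message is exactly the permitted function output, $H(Y^n\mid X^n,S^n)=H(Y^n\mid X^n,Y^n\oplus Z^n)=nH(Y\mid X,Y\oplus Z)$, which is $\Omega(n)$ as long as $Y$ is not a deterministic function of $(X,Y\oplus Z)$, and recovering $Z^n$ is equivalent to recovering $Y^n$ here, so both are blocked. The argument for $P_2$ is symmetric under exchange of $X$ and $Y$.

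\emph{Main obstacle.} The delicate point is the $P_3$ bound: one must make the crude estimate $H(\tilde X[R]\mid Z^n,U^n)\leq|R|$ do genuine work, i.e.\ confirm that the $|R|\approx nH(X\oplus Y)$ revealed coordinates leave linear residual uncertainty about $X^n$, and ensure that conditioning on the side information $Z^n$ does not erode the additive-correlation gap $H(X\mid X\oplus Y)-H(X\oplus Y)>0$ (equivalently, that $I(X;Z\mid X\oplus Y)$ stays below this gap), which is the exact role of the pairwise-additive-correlation assumption. A sharper, coordinate-wise version would replace $H(\tilde X[R])\leq|R|$ by a sum of the polarized conditional entropies $H(\tilde X_i\mid\tilde X^{i-1},Z^n)$ over $i\in R^c$ and compare the polar set $R_{\e,n}(X\oplus Y)$ with the conditional polar set of $X$ given $Z$, which requires a side-information version of Theorem~\ref{thm:1}; the technical nuisance there is the conditioning on the future coordinates contained in $\tilde X[R]$. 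Everything else, namely the Fano conversion and the symmetry reductions, is routine.
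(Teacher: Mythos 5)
Your skeleton is essentially the paper's: reduce the problem to party $P_3$, apply Fano's inequality, and lower-bound the residual entropy by subtracting the roughly $2|R_{\e,n}(X\oplus Y)|\approx 2nH(X\oplus Y)$ revealed bits from $H(X^n,Y^n)=nH(X,Y)$, which is positive exactly by additive correlation. The paper likewise observes that, since $P_3$ decodes $X^n\oplus Y^n$, recovering either of $X^n,Y^n$ is equivalent to recovering both, bounds the joint error $\pp(\hat X^n,\hat Y^n\neq X^n,Y^n)$, and dismisses $P_1,P_2$ in one sentence because they receive only the function value.

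There is, however, a genuine gap at your central step. Your $P_3$ bound conditions on $Z^n$, and so what you actually need is $H(X\mid Z,X\oplus Y)>H(X\oplus Y)$. You assert that "this is precisely where the pairwise hypothesis enters," but the additive-correlation hypothesis is a statement about the pair $(X,Y)$ alone, namely $H(X\mid X\oplus Y)>H(X\oplus Y)$; it places no constraint on $I(X;Z\mid X\oplus Y)$, so your bracket $H(X\mid Z,X\oplus Y)-H(X\oplus Y)$ can be zero or negative under the stated assumptions (for instance, $Z=X$ is compatible with $(X,Y)$ being additively correlated). The same issue recurs in your $P_1$/$P_2$ cases, where you need "$Y$ not a deterministic function of $(X,Y\oplus Z)$," again not implied by the hypotheses. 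The paper's own argument closes because it takes $\hat X^n,\hat Y^n$ to be functions of the received messages $\tilde X[R_{\e,n}(X\oplus Y)],\tilde Y[R_{\e,n}(X\oplus Y)]$ only, never conditioning on $Z^n$ (one may debate whether that is faithful to the stated security definition, but that is the proof the paper gives). Your more demanding reading of the definition leaves you with a positivity condition you cannot discharge from the hypotheses; you flag this in your "main obstacle" paragraph, but flagging it is not closing it, so as written the proof is incomplete at its decisive step.
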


\begin{proof}
Since $P_1$ receives only $X^n\oplus Y^n\oplus Z^n$ during the protocol, it is clear that it cannot estimate $Y^n$ or $Z^n$ with a vanishing error probabilities. Similarly, $P_2$ cannot estimate $X^n$ or $Z^n$ with a vanishing error probability. Therefore, to prove that the protocol 2 is asymptotical secure for all parties, it is enough to prove that for $P_3$,
\begin{eqnarray}
Pr(\hat{X}^n\neq X^n) \gg 0\label{eqn:security1}
\end{eqnarray}
where $a_n \gg 0$ means $\liminf_{n \to \infty} a_n >0$, 
and
\begin{eqnarray}
Pr(\hat{Y}^n\neq Y^n)\gg0,\label{eqn:security2}
\end{eqnarray}
where $\hat{X}^n$ and $\hat{Y}^n$ are $P_3$'s estimations of $X^n$ and $Y^n$ given $\tilde{X}[R_{\epsilon,n}(X\oplus Y)]$ and $\tilde{Y}[R_{\epsilon,n}(X\oplus Y)]$. 

During the protocol, $P_3$ receives $\tilde{X}[R_{\epsilon,n}(X\oplus Y)]$ and $\tilde{Y}[R_{\epsilon,n}(X\oplus Y)]$ and knows $X^n\oplus Y^n$ with the vanishing error probability by Theorem \ref{thm:1}. Thus, for $P_3$, knowing $X^n$ with the vanishing error probability guarantees recovery of $Y^n=X^n\oplus (X^n\oplus Y^n)$, and vice verse. Then,
\begin{eqnarray}
&& Pr(\hat{X}^n\neq X^n)\rightarrow 0 \cup Pr(\hat{Y}^n\neq Y^n)\rightarrow 0\nonumber\\
&&\Leftrightarrow Pr(\hat{X}^n,\hat{Y}^n\neq X^n,Y^n)\rightarrow 0.\label{polar:error equiv}
\end{eqnarray}
Therefore, it is enough to show $Pr(\hat{X}^n,\hat{Y}^n\neq X^n,Y^n)\gg0$. 
Notice that 
\begin{eqnarray}
X^n, Y^n\rightarrow \tilde{X}[R_{\epsilon,n}(X\oplus Y)],\tilde{Y}[R_{\epsilon,n}(X\oplus Y)]\rightarrow \hat{X}^n,\hat{Y}^n\nonumber
\end{eqnarray}
forms a Markov chain. Then, by Fano's inequality,
\begin{align}
&Pr(\hat{X}^n,\hat{Y}^n\neq X^n,Y^n)\nonumber
\\&\geq \frac{H(X^n, Y^n|\tilde{X}[R_{\epsilon,n}(X\oplus Y)],\tilde{Y}[R_{\epsilon,n}(X\oplus Y)]-1}{2n}\nonumber
\\&\geq \frac{H(X^n, Y^n)-H(\tilde{X}[R_{\epsilon,n}(X\oplus Y)],\tilde{Y}[R_{\epsilon,n}(X\oplus Y)])-1}{2n}\nonumber
\\&\geq \frac{H(X^n, Y^n)-2H(\tilde{X}[R_{\epsilon,n}(X\oplus Y)])-1}{2n}\nonumber
\\&\geq \frac{H(X^n, Y^n)-2|R_{\epsilon,n}(X\oplus Y)|-1}{2n}\nonumber
\\&= \frac{nH(X, Y)-2nH(X\oplus Y)-1}{2n}\label{entropy bound}
\end{align}
where (\ref{entropy bound}) follows from \eqref{eqn:opt rate}. 
Then, since $X^n$ and $Y^n$ are \emph{additively-correlated},
\begin{eqnarray}
\liminf_{n \to \infty} Pr(\hat{X}^n,\hat{Y}^n\neq X^n,Y^n)>0.
\end{eqnarray}
\end{proof}
\begin{remark}
Note that since the ASP protocol is asymptotically secure, 
it must be that 
$H(X,Y)-2H(X\oplus Y)>0$ implies $H(X)-H(X\oplus Y)>0$ and $H(Y)-H(X\oplus Y)>0$. 
The reason for that is due to the nested property of source polar codes, see for example \cite{abbe-corr}, which implies that if $H(X+Y) > H(X)$, then $R_{\e,n}(X+Y)$ contains $R_{\e,n}(X)$ and hence observing $\tilde{X}^n[R_{\e,n}(X+Y)]$ allows to decode $X^n$ correctly. 
In fact the above implication is true since  
\begin{align}
&H(X,Y)-2H(X\oplus Y) >0\nonumber\\
&\Longleftrightarrow H(X|X+Y) > H(X\oplus Y)
\end{align}
and $H(X|X+Y) \leq H(X)$. 

\end{remark}

%
%

\section{Open problems}
Concerning the first part, it would interesting to set conjecture \ref{conj1}. We believe that a logarithmic bound can be obtained with the approach of this paper. 
As mentioned in Remark \ref{remark_mult}, it is possible to obtain achievability results on $\rho$ for multivariate polynomials. The scaling of the $\rho$ can then be analyzed.    
For the second part, it would be interesting to establish converse results for XOR function, and any result for other type of functions, starting perhaps with the real-addition. 

%

\section*{Acknowledgements}
We thank R.~Chou and M.~Bloch for stimulating discussions.  

\bibliographystyle{amsplain}
\bibliography{crypto}


\end{document}